\documentclass[aps,prb,twocolumn,superscriptaddress,floatfix]{revtex4-2}
\usepackage{amsmath,amssymb,amsthm}
\usepackage{booktabs}
\usepackage{hyperref}

\newtheorem{proposition}{Proposition}
\newtheorem{corollary}{Corollary}

\begin{document}

\title{Belief Propagation Convergence Prediction for Bivariate Bicycle Quantum Error Correction Codes}

\author{Anton Pakhunov}
\affiliation{Independent Researcher}

\begin{abstract}
Decoding Bivariate Bicycle (BB) quantum error correction codes typically requires Belief Propagation (BP) followed by Ordered Statistics Decoding (OSD) post-processing when BP fails to converge. Whether BP will converge on a given syndrome is currently determined only after running BP to completion. We show that convergence can be predicted in advance by a single modulo operation: if the syndrome defect count is divisible by the code's column weight $w$, BP converges with high probability (100\% at $p \leq 0.001$, degrading to 87\% at $p = 0.01$); otherwise, BP fails with probability $\geq 90\%$. The mechanism is structural: each physical data error activates exactly $w$ stabilizers, so a defect count not divisible by $w$ implies the presence of measurement errors outside BP's model space. Validated on five BB codes with column weights $w = 2$, 3, and~4, mod-$w$ achieves AUC = 0.995 as a convergence classifier at $p = 0.001$ under phenomenological noise, dominating all other syndrome features (next best: AUC = 0.52). The false positive rate scales empirically as $O(p^{2.05})$ ($R^2 = 0.98$), confirming the analytical bound from Proposition~2. Among BP failures on mod-$w = 0$ syndromes, 82\% contain weight-2 data error clusters, directly confirming the dominant failure mechanism. We further demonstrate that the prediction is invariant under BP scheduling strategy and decoder variant, including Relay-BP~\cite{muller2025} --- the strongest known BP enhancement for quantum LDPC codes --- and characterize its degradation near the code threshold. These results apply directly to IBM's Gross code $[\![144, 12, 12]\!]$ and Two-Gross code $[\![288, 12, 18]\!]$, targeted for deployment in 2026--2028.
\end{abstract}

\maketitle

\section{Introduction}

IBM's quantum computing roadmap relies on a family of codes known as Bivariate Bicycle (BB) codes~\cite{bravyi2024}. The Gross code $[\![144, 12, 12]\!]$ encodes 12 logical qubits in 144 physical qubits --- an encoding rate twelve times higher than surface codes at comparable distance. The Kookaburra processor targets this code for 2026; Starling targets the Two-Gross code $[\![288, 12, 18]\!]$ for 2028~\cite{ibmroadmap}.

Decoding BB codes is harder than decoding surface codes. Minimum-weight perfect matching (MWPM), the standard decoder for surface codes, does not apply directly to BB codes because their parity check matrices contain \textit{hyperedges}: single error events that trigger more than two stabilizer measurements simultaneously. The prevailing approach is Belief Propagation with Ordered Statistics Decoding (BP+OSD)~\cite{roffe2020}.

BP+OSD suffers from a fundamental inefficiency: whether BP will succeed is unknown until it either converges or exhausts its iteration budget. When BP succeeds, decoding takes approximately 46~$\mu$s. When it fails and OSD is invoked, decoding takes approximately 108~$\mu$s. Every syndrome must pass through BP before this outcome is known.

We show that convergence can be predicted in $O(1)$ time, before BP is invoked.

\section{Background}

\subsection{Bivariate Bicycle Codes}

BB codes are constructed from two polynomials $A$ and $B$ over a cyclic group algebra. The construction is detailed in~\cite{bravyi2024}; the property relevant to this work is the \textit{column weight}~$w$ of the parity check matrix.

A column weight of $w$ means each data qubit participates in exactly $w$ stabilizer measurements. For all BB codes on IBM's roadmap, $w = 3$, arising from 3-term polynomials such as $A = x^3 + y + y^2$.

The consequence is immediate: a single X-type physical error on any qubit triggers exactly 3 Z-stabilizer changes, producing exactly 3 syndrome defects --- without exception.

\begin{table}[t]
\caption{BB codes in IBM's quantum roadmap. All have column weight $w = 3$.}
\begin{tabular}{lcc}
\toprule
Code & Column weight $w$ & IBM target \\
\midrule
$[\![72, 12, 6]\!]$ & 3 & --- \\
$[\![144, 12, 12]\!]$ (Gross) & 3 & Kookaburra 2026 \\
$[\![288, 12, 18]\!]$ (Two-Gross) & 3 & Starling 2028 \\
$[\![360, 12, \leq 24]\!]$ & 3 & --- \\
\bottomrule
\end{tabular}
\end{table}

\subsection{Why BP+OSD Is Slow}

BP performs iterative message-passing on the code's Tanner graph to find a consistent error assignment. For surface codes, BP frequently fails due to the abundance of short cycles that trap messages in oscillatory loops. BB codes have fewer short cycles, so BP performs reasonably well --- but not universally.

When BP fails to converge, OSD takes over. OSD is guaranteed to produce a valid solution but requires Gaussian elimination over the most reliable bits --- an $O(n^3)$ operation. For the Gross code, the resulting latencies are:
\begin{itemize}
\item BP converges: ${\sim}46$~$\mu$s (code-capacity) or ${\sim}100$~$\mu$s (phenomenological)
\item BP fails, OSD invoked: ${\sim}108$~$\mu$s (code-capacity) or ${\sim}300$~$\mu$s (phenomenological)
\item Under phenomenological noise at $p = 0.001$: average ${\sim}109$~$\mu$s (given ${\sim}64\%$ convergence rate)
\end{itemize}

If convergence failure were known in advance, the OSD penalty could be avoided entirely for syndromes where BP will succeed.

\section{The Prediction}

\subsection{The Observation}

The key structural fact is the following:

\textit{Each physical data error activates exactly $w$ stabilizers. Therefore, if the total defect count is not divisible by $w$, the syndrome cannot be produced by data errors alone --- at least one measurement error must be present.}

A measurement error flips exactly one stabilizer outcome without a corresponding data error, contributing exactly 1 defect. BP's Tanner graph models data errors only and has no mechanism to represent measurement errors. When the syndrome requires measurement error contributions for consistency, BP cannot find a satisfying assignment and fails to converge.

This yields the following prediction rule:
\begin{verbatim}
if defect_count % w == 0:
    BP will likely converge
    (100% at p <= 0.001; 87% at p = 0.01)
else:
    BP will fail (probability >= 90%)
\end{verbatim}

For all IBM roadmap BB codes, $w = 3$, so the test reduces to divisibility by~3.

\subsection{Why This Works}

\begin{proposition}[Defect parity]
Under phenomenological noise on a BB code with column weight $w$, the syndrome defect count satisfies
\begin{equation}
\mathrm{defect\_count} \bmod w = |\mathbf{m}| \bmod w
\end{equation}
where $|\mathbf{m}|$ is the number of measurement errors. In particular, the defect count modulo $w$ depends only on measurement errors and is independent of data errors.
\end{proposition}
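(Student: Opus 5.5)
The plan is to prove the congruence as a counting identity over the integers. It rests on the structural fact stated above: every data error activates exactly $w$ stabilizers, and every measurement error activates exactly one. I will write the noise realization as a data-error vector $\mathbf{e}\in\{0,1\}^n$ and a measurement-error vector $\mathbf{m}\in\{0,1\}^r$. The defect count will be read as the total number of stabilizer activations produced by these events, which is the sense in which the paper speaks of each error ``producing'' defects. The key steps are as follows.

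First, from the column-weight property of the parity check matrix $H$ (every column has exactly $w$ ones, which holds for the BB construction because $A$ and $B$ are three-term polynomials, or $w$-term in general), I will record that the contribution of data error $j$ is $\sum_i H_{ij}=w$. Second, each measurement error contributes exactly one activation. Third, I will sum these contributions:
\begin{equation}
\mathrm{defect\_count}=\sum_{j} e_j\sum_i H_{ij}+\sum_i m_i = w\,|\mathbf{e}|+|\mathbf{m}|.
\end{equation}
Reducing modulo $w$ then gives $\mathrm{defect\_count}\bmod w=|\mathbf{m}|\bmod w$. This is independent of $\mathbf{e}$, which is the ``in particular'' clause.

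The main obstacle is reconciling this integer count with the physical syndrome, which is the GF(2) vector $\mathbf{s}=H\mathbf{e}+\mathbf{m}\bmod 2$. Its Hamming weight is not additive. Using $|\mathbf{a}\oplus\mathbf{b}|=|\mathbf{a}|+|\mathbf{b}|-2|\mathbf{a}\wedge\mathbf{b}|$ repeatedly, one gets
\begin{equation}
|\mathbf{s}|=w|\mathbf{e}|+|\mathbf{m}|-2c,
\end{equation}
where $c$ counts the pairwise cancellations of activations on a shared stabilizer. So the step I will treat carefully is the identification of the defect count with the activation total. I will make explicit in the proof that $\mathrm{defect\_count}$ counts activations with multiplicity, as the proposition's wording (``depends only on measurement errors'') requires.

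I will then remark on how this relates to the observed Hamming weight. For $w=2$ the congruence also holds for the Hamming weight $|\mathbf{s}|$, because $2c\equiv 0 \pmod 2$. For $w\ge 3$, the Hamming weight agrees with the activation count exactly when $c=0$. Any event with $c>0$ needs two error events sharing a stabilizer, which is the origin of the $O(p^2)$ deviation analyzed in the following proposition. That remark is commentary and not part of the proof of the stated identity.
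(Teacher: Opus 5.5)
Your argument is the same as the paper's: each data error contributes $w$ activations and each measurement error contributes one, so the total is $w|\mathbf{e}|+|\mathbf{m}|$, and reducing modulo $w$ gives the claim. The paper equates $\mathrm{defect\_count}$ with this with-multiplicity total without saying so, and your caveat matters: for the observed syndrome weight (what \texttt{syndrome.sum()} computes in the paper's rule) the congruence fails once $w\ge 3$, e.g.\ two data errors sharing exactly one check produce $2w-2\not\equiv 0 \pmod w$ defects with $|\mathbf{m}|=0$, so the proposition holds only in the sense you prove, and the wording does not force that reading.
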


\begin{proof}
Each data error on qubit $j$ activates exactly $w$ stabilizers (the column weight of $H_z$), contributing $w$ defects. Each measurement error flips exactly one stabilizer outcome, contributing 1 defect. Therefore $\mathrm{defect\_count} = w \cdot |\mathbf{e}| + |\mathbf{m}|$, where $|\mathbf{e}|$ is the number of data errors. Reducing modulo $w$: $\mathrm{defect\_count} \bmod w = |\mathbf{m}| \bmod w$.
\end{proof}

\begin{proposition}[Two failure modes]
Under the conditions of Proposition~1, BP can fail on a $\mathrm{mod}\text{-}w = 0$ syndrome in exactly two ways:

(a) \textit{Measurement-error failures}: $|\mathbf{m}| \geq w$ measurement errors preserve $\mathrm{mod}\text{-}w = 0$ but place the syndrome outside the image of $H_z$. The probability of this event, conditioned on $\mathrm{mod}\text{-}w = 0$, is
\begin{equation}
P(|\mathbf{m}| \geq w \mid |\mathbf{m}| \bmod w = 0) = O(p^w)
\end{equation}
with leading term $\binom{n_\mathrm{meas}}{w} p^w (1-p)^{n_\mathrm{meas} - w}$.

(b) \textit{Data-error failures}: even with $|\mathbf{m}| = 0$, weight-2 data error clusters (two errors on qubits sharing a stabilizer) can create frustrated loops in the Tanner graph that prevent BP convergence. The probability of this event is $O(p^2)$.

The total false positive rate is $O(p^2)$, dominated by data-error failures at low $p$.
\end{proposition}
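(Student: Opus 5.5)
I would prove Proposition~2 by partitioning the event ``BP fails and $\mathrm{defect\_count} \bmod w = 0$'' according to the number of measurement errors. By Proposition~1, conditioning on $\mathrm{mod}\text{-}w = 0$ is the same as conditioning on $|\mathbf{m}| \equiv 0 \pmod w$. So the conditioning event splits into $\{|\mathbf{m}| = 0\}$ and $\{|\mathbf{m}| \geq w,\ |\mathbf{m}| \equiv 0 \bmod w\}$. These are exhaustive and disjoint, which gives the ``exactly two ways'' claim. Failure in the first piece is by definition a data-error failure (b); failure in the second is attributed to (a). I will work with i.i.d.\ data errors on $n$ qubits with rate $p$ and i.i.d.\ measurement errors on $n_\mathrm{meas}$ checks with rate $p$. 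I take as given the paper's modelling assumption that BP only searches $\mathrm{im}(H_z)$.

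\textbf{Step 1 (mode (a)).} The conditioning event has probability $P(|\mathbf{m}| \equiv 0) \geq (1-p)^{n_\mathrm{meas}} = 1 - O(p)$. It therefore suffices to bound the numerator $P(|\mathbf{m}| \geq w) = \sum_{k \geq w} \binom{n_\mathrm{meas}}{k} p^k (1-p)^{n_\mathrm{meas}-k}$. The $k = w$ term is the stated leading term. The tail is bounded by a geometric series with ratio $n_\mathrm{meas}\,p/(1-p)$, so the sum is $O(p^w)$ for fixed code size and $p < 1/(2n_\mathrm{meas})$. Dividing by $1 - O(p)$ preserves the order. I will not try to prove that every such syndrome lies outside $\mathrm{im}(H_z)$, because some measurement patterns may coincide with a column sum. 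The statement only needs an upper bound on the failure probability, so bounding by $P(|\mathbf{m}| \geq w)$ is enough.

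\textbf{Step 2 (mode (b)).} I need $P(\text{BP fails},\ |\mathbf{m}| = 0) = O(p^2)$. The approach is to show that BP converges whenever the data error has weight at most $1$. The weight-$0$ case is trivial: the syndrome is zero and BP returns zero. For weight $1$, all $w$ checks adjacent to qubit $j$ fire and no others do. I would argue that after the first iteration the posterior log-likelihood of $j$ is dominated by $w$ unanimous unsatisfied-check messages. Every other qubit $i$ sees at most one unsatisfied neighbouring check, because BB Tanner graphs have girth at least $6$, so two distinct qubits share at most one check. Hence only $j$ flips and the syndrome is matched. This leaves failures that require $|\mathbf{e}| \geq 2$, which has probability $\binom{n}{2}p^2 + \dots = O(p^2)$. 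Weight-$2$ events where the two qubits share a check are the natural candidates for frustration, which matches the description in (b).

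\textbf{Step 3 (conclusion).} I sum the two modes, giving $O(p^w) + O(p^2)$. Since $w \geq 2$, this is $O(p^2)$, and for $w \geq 3$ the data-error term dominates at small $p$.

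The main obstacle is Step 2's claim that BP succeeds on every weight-$1$ error. This depends on the channel prior, the BP variant (min-sum versus product-sum) and the girth. I would first verify it explicitly for one iteration of product-sum BP with prior LLR $\log((1-p)/p)$. The check is that $w$ unsatisfied checks outweigh the prior while a single unsatisfied check does not. That requires $w \geq 2$ and holds for $p$ small enough. If a clean argument fails for some variant, I would weaken (b) to an upper bound $P(|\mathbf{e}| \geq 2) = O(p^2)$ under the stated assumption that single errors are always decoded.
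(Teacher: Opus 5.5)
Your decomposition by $|\mathbf{m}|$ and your treatment of (a) match the paper's proof. The paper simply records the $k=w$ binomial term; your tail bound and the normalisation by $P(|\mathbf{m}|\equiv 0)\ge(1-p)^{n_\mathrm{meas}}$ make that step cleaner.

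For (b) you take a genuinely different route. The paper argues from the failure side: it asserts that $|\mathbf{m}|=0$ failures come from two errors sharing a stabiliser, bounds the probability of such a cluster by a birthday argument, and explains the failure by a 4-cycle $i\to S\to j\to S'\to i$. You argue from the success side. A first-iteration min-sum count gives posterior $L(1+w-2k_v)$ for a qubit adjacent to $k_v$ unsatisfied checks, with $L$ the prior LLR. Together with 4-cycle-freeness, this shows that every $|\mathbf{e}|\le 1$, $|\mathbf{m}|=0$ syndrome decodes, so $P(\text{fail},|\mathbf{m}|=0)\le P(|\mathbf{e}|\ge 2)=O(p^2)$. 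This is an honest upper bound on \emph{all} $|\mathbf{m}|=0$ failures. The paper's estimate, by contrast, covers only the configuration it blames and leans on the unproved remark that ``no single error can create a cycle''; your lemma is exactly what that remark needs. What you give up is any failure mechanism, and hence any lower bound.

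That is where the gap sits, in Step~3. ``Dominated by data-error failures'' does not follow from an $O(p^w)$ bound and an $O(p^2)$ bound. You need a matching lower bound, $P(\text{fail},\,|\mathbf{m}|=0,\,\mathrm{mod}\text{-}w=0)=\Omega(p^2)$: a concrete weight-2 data pattern with defect count $\equiv 0 \pmod{w}$ on which BP provably fails. Nothing in your argument supplies one.

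You also cannot import the paper's mechanism, for two reasons:
\begin{itemize}
\item Your girth-$\ge 6$ premise rules out the second shared check $S'$.
\item With the actual mod-2 syndrome, two errors sharing one check give $2w-2\equiv -2\not\equiv 0 \pmod{w}$ defects for $w\ge 3$. Proposition~1's count $w|\mathbf{e}|+|\mathbf{m}|$ ignores this cancellation, so the clusters you call natural candidates are not in the mod-$w=0$ class at all.
\end{itemize}
The paper's own fixed-weight code-capacity test even reports 100\% convergence at weight~2. So either prove such a lower bound or present the domination as empirical (Corollary~1).

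Two smaller points:
\begin{itemize}
\item Girth $\ge 6$ is not a general property of BB codes: $A=1+x+x^2$ makes $h$ and $xh$ share two checks. State it as a hypothesis, and check it for the roadmap polynomials by verifying three separate distinctness conditions: the quotients $a_ia_j^{-1}$ ($i\ne j$) are pairwise distinct, likewise $b_ib_j^{-1}$, and likewise the nine $a_ib_j^{-1}$.
\item Your denominator $1-O(p)$ includes trivial syndromes. If the false-positive rate is taken over nontrivial syndromes, as in the paper's experiments, the denominator is $\Theta(p)$ and your bound weakens to $O(p)$.
\end{itemize}
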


\textit{Proof of (a).} When $|\mathbf{m}| = 0$, the syndrome lies in the image of $H_z$ and a valid solution exists in BP's model space. Measurement-error failures require $|\mathbf{m}| \geq w$ (the minimum nonzero count preserving $|\mathbf{m}| \bmod w = 0$). Conditioned on $|\mathbf{m}| \bmod w = 0$, this probability has leading term $\binom{n_\mathrm{meas}}{w} p^w / (1-p)^{n_\mathrm{meas}}$, vanishing as $O(p^w)$. For $w = 3$, $n_\mathrm{meas} = 360$: this gives $\approx 0.008$ at $p = 0.001$. $\square$

\textit{Remark on (b).} Data-error failures occur when two errors share a stabilizer, producing a defect count divisible by $w$ but creating a local cycle of length~4 in the Tanner graph. The probability that any two of $\lambda \approx \alpha n T p$ active errors share a stabilizer is $O(p^2)$ by the birthday argument. Table~13 confirms this: at $p = 0.01$, 6.1\% of 3-defect syndromes fail. While 3 defects can arise from either one data error (0 measurement errors) or three measurement errors (0 data errors), the former dominates at low~$p$.

The reason weight-2 clusters cause BP failure is structural. Two errors on qubits $i$ and $j$ that share a stabilizer $S$ create a cycle of length~4 in the Tanner graph: $i \to S \to j \to S' \to i$, where $S'$ is the second shared check. In min-sum BP, messages traversing a 4-cycle reinforce their own initial estimates after two iterations, producing sign oscillation rather than convergence~\cite{richardson2008}. This is the minimum trapping set for a weight-3 BB code: no single error can create a cycle, so two errors sharing a check is the smallest configuration that traps the decoder.

The distinction between (a) and (b) is important: mode~(a) is the mechanism the mod-$w$ prediction detects, while mode~(b) is invisible to it. The prediction's accuracy at low~$p$ ($\geq 99.9\%$ at $p \leq 0.001$) reflects the dominance of mode~(a) in that regime, with mode~(b) contributing only at higher noise.

\begin{corollary}[Empirical scaling]
The false positive rate follows $\mathrm{FP} = C \cdot p^\alpha$ where $\alpha = 2.045 \pm 0.05$ for $p \leq 0.005$ (log-log fit, $R^2 = 0.979$), consistent with $O(p^2)$ from Proposition~2(b).
\end{corollary}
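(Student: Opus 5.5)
The Corollary is empirical: it asserts a fitted exponent with an error bar and an $R^2$, so it cannot be derived from Propositions~1 and~2 alone. The plan is to split it into two parts. One is an \emph{analytical consistency} part, showing that Proposition~2 predicts $\mathrm{FP}(p) = C p^2 + O(p^3)$ in the regime $p \le 0.005$. The other is a \emph{statistical} part, showing that a log-log regression of measured false-positive rates recovers an exponent compatible with~2.

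\textbf{Analytical part.} Decompose the event ``BP fails and $\mathrm{defect\_count} \bmod w = 0$'' into the two disjoint modes of Proposition~2.
\begin{itemize}
\item Mode~(a) requires $|\mathbf{m}| \ge w$. Its conditional probability is bounded by the binomial tail $\binom{n_\mathrm{meas}}{w}p^w(1+O(np))$, which is $O(p^3)$ for $w=3$ and $O(p^4)$ for $w=4$.
\item Mode~(b) requires at least one pair of data errors sharing a check. Count such pairs by a union bound: the number of qubit pairs with overlapping supports is at most $n\,w(r-1)/2$, where $r$ is the row weight. Each pair occurs with probability $p^2(1-p)^{\cdots}$. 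This gives $\mathrm{FP}_b \le c_b p^2$.
\item For the matching lower bound, use the empirical fact that a constant fraction (82\%) of such configurations actually cause failure. This gives $\mathrm{FP}_b \ge c_b' p^2$.
\end{itemize}
Combining the two modes yields $\mathrm{FP}(p) = C p^2(1+O(p))$. Hence $\log \mathrm{FP} = \log C + 2\log p + O(p)$, and the local slope deviates from 2 by only $O(p\log^{-1}(1/p))$ over the fitted window.

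\textbf{Statistical part.} Collect FP estimates $\hat f_i$ at grid points $p_i \le 0.005$, each from $N_i$ Monte Carlo shots. Then fit $\log\hat f_i = \log C + \alpha\log p_i$ by least squares, weighting each point by inverse binomial variance, so that weight $\propto N_i \hat f_i$. The uncertainty $\pm0.05$ is the standard error of the slope from this fit, optionally cross-checked by bootstrap over shots. $R^2$ is reported directly. The conclusion is that $|\hat\alpha-2| \le 0.05$ lies within one standard error, which is consistent with the analytical part.

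\textbf{Main obstacle.} The hardest step is the lower bound in mode~(b): proving that weight-2 clusters make BP fail with probability bounded away from zero. Without it, FP could in principle scale faster than $p^2$. I would first try to exhibit an explicit 4-cycle configuration on which min-sum oscillates deterministically. If that proves intractable, I would fall back on the measured 82\% attribution as the empirical input. A secondary concern is the bias of the log transform at small $\hat f_i$; I would handle it by excluding points with fewer than about 20 failures.
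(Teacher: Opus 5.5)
\textbf{Statistical part.} You are right that the corollary is empirical and rests on a log-log regression, but the regression you specify would not reproduce the stated numbers. The paper's figures are what a plain unweighted least-squares fit of $\ln\mathrm{FP}$ on $\ln p$ to the four tabulated points with $p\le 0.005$ returns: $\hat\alpha\approx 2.05$ on the table's values. Adding $p=0.01$ gives $2.13$, exactly the steepening reported after the table.

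Your weights $\propto N_i\hat f_i$ are the failure counts, and these grow steeply with $p$. The $p=0.002$ and $p=0.005$ points therefore dominate and pull $\hat\alpha$ toward their local slope, $\ln(0.0274/0.0034)/\ln 2.5\approx 2.28$.

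Your rule of discarding points with fewer than about 20 failures removes $p=0.0005$ outright, since $0.0003\times 50{,}000=15$. In practice it also removes $p=0.001$: that point allows at most 25 failures even if every shot were in the class, and only a fraction are. You are left with two or three points, a slope of roughly $2.3$--$2.5$, and no meaningful $R^2$.

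Nor can $\pm0.05$ be the standard error of the slope. With $n=4$ and $R^2=0.979$, $\mathrm{SE}=\hat\alpha\,[(1-R^2)/(R^2(n-2))]^{1/2}\approx 0.21$. So your conclusion ``$|\hat\alpha-2|\le 0.05$, within one standard error'' follows from neither fit. For the unweighted fit, consistency with 2 holds only at the $\pm0.2$ level.

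\textbf{Analytical part.} The paper does not attempt this; it only notes qualitative agreement with Proposition~2(b). As written, your argument does not yield $p^2$, for five reasons.

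\emph{The 82\% figure.} It is $P(\text{weight-2 cluster}\mid\text{mod-3}=0\text{ failure})$, measured at $p=0.01$ on high-weight failures. It is not $P(\text{failure}\mid\text{cluster})$, so it supplies no lower bound $c_b'p^2$.

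\emph{Normalization.} FP is a rate conditioned on the mod-3 $=0$ class, and the paper's mod-3 statistics are over nontrivial syndromes (as in the noise-level sweep). That class has probability $\Theta(p)$ as $p\to 0$. A joint failure probability $\Theta(p^2)$ would then give $\mathrm{FP}=\Theta(p)$, and a $p^2$ law would need $\Theta(p^3)$ joint events. You never divide by the class probability.

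\emph{Cancellation mod 2.} For $w=3$, two data errors in the same syndrome vector that share a check produce $2w-2=4$ defects, because the shared check flips twice. An isolated overlapping pair is therefore never in the mod-3 $=0$ class: the identity $\mathrm{defect\_count}=w|\mathbf{e}|+|\mathbf{m}|$ presupposes disjoint supports. The 10.6\% convergence at 4 defects for $p=0.01$ is consistent with this.

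\emph{The fitted window is not asymptotic.} Your $O(p)$ corrections carry the number of fault locations (several hundred to about $10^3$), as your own factor $(1+O(np))$ shows. Since $np$ is of order one throughout $0.0005\le p\le 0.005$, no low-$p$ expansion controls the fitted slope there.

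\emph{No 4-cycles in the Gross code.} The exponent differences within $\{x^3,y,y^2\}$, within $\{y^3,x,x^2\}$, and across the two sets are all distinct in $\mathbb{Z}_{12}\times\mathbb{Z}_6$. Two qubits therefore share at most one check, the Tanner graph has no 4-cycles, and the 4-cycle configuration you hope to exhibit does not exist.
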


\begin{table}[t]
\caption{Empirical false positive rate vs $p$ (Gross code, phenomenological noise, 50,000 shots per point).}
\begin{tabular}{cc}
\toprule
$p$ & False positive rate \\
\midrule
0.0005 & 0.0003 \\
0.001  & 0.0005 \\
0.002  & 0.0034 \\
0.005  & 0.0274 \\
0.01   & 0.1296 \\
\bottomrule
\end{tabular}
\end{table}

The slope steepens to $\alpha = 2.13$ when $p = 0.01$ is included, consistent with weight-3 cluster contributions ($O(p^3)$) beginning to appear at higher noise. Zero false positives were observed at $p \leq 0.0002$ ($>$1,300 mod-3 = 0 syndromes tested).

\subsection{The Prediction in Practice}

The implementation requires a single line:
\begin{verbatim}
def predict_convergence(syndrome, w=3):
    return int(syndrome.sum()) % w == 0
\end{verbatim}

The defect count is already computed during standard syndrome preprocessing. The prediction adds one modulo operation with zero additional overhead.

\section{Experimental Validation}

All experiments use Stim~\cite{gidney2021} for syndrome sampling and Roffe's ldpc library~\cite{roffe2022} for BP decoding. Timing benchmarks were performed on an Apple M4 Pro processor in single-threaded mode using min-sum BP with max\_iter = 100.

\subsection{Code-Capacity Noise: Fixed-Weight Errors}

We first tested BP convergence on errors of exactly weight 1, 2, and~3, applied to the Gross code $[\![144, 12, 12]\!]$ without measurement noise.

\begin{table}[t]
\caption{BP convergence on fixed-weight errors (code-capacity noise, 5000 samples per weight).}
\begin{tabular}{cccc}
\toprule
Weight & Parallel & Serial & Exact \\
\midrule
1 & 100.0\% & 100.0\% & 100.0\% \\
2 & 100.0\% & 100.0\% & 100.0\% \\
3 & 99.9\%  & 100.0\% & 99.8\% \\
\bottomrule
\end{tabular}
\end{table}

Under code-capacity noise, every syndrome lies in the image of $H_z$, so a valid solution always exists in BP's model space. The mod-$w$ prediction becomes informative only in the presence of measurement noise.

\subsection{Phenomenological Noise: The Prediction Emerges}

Under phenomenological noise (5 syndrome extraction rounds), the mod-$w$ structure becomes the dominant predictor of convergence. We sampled 10,000 shots for each error rate.

\begin{table}[t]
\caption{mod-3 convergence prediction under phenomenological noise, Gross code, parallel BP.}
\begin{tabular}{ccccc}
\toprule
$p$ & mod-3=0 & mod-3=1 & mod-3=2 & Overall \\
\midrule
0.01 & \textbf{86.8\%} & 11.3\% & 3.5\% & 41.8\% \\
0.03 & \textbf{17.7\%} & 10.0\% & 4.6\% & 10.8\% \\
0.05 & \textbf{1.1\%}  & 1.3\%  & 0.4\% & 0.9\% \\
\bottomrule
\end{tabular}
\end{table}

At $p = 0.01$, the mod-3 prediction separates convergent from non-convergent syndromes by a factor of 8--25$\times$.

\begin{table}[t]
\caption{Convergence by defect count (phenomenological noise, $p = 0.01$, parallel BP).}
\begin{tabular}{cccc}
\toprule
Defects & mod 3 & BP conv. & Count \\
\midrule
1  & 1 & 0.0\%           & 793 \\
2  & 2 & 0.0\%           & 343 \\
3  & \textbf{0} & \textbf{93.3\%} & 1671 \\
4  & 1 & 10.6\%          & 1299 \\
5  & 2 & 2.2\%           & 595 \\
6  & \textbf{0} & \textbf{86.7\%} & 1274 \\
7  & 1 & 15.8\%          & 956 \\
8  & 2 & 4.5\%           & 400 \\
9  & \textbf{0} & \textbf{75.3\%} & 576 \\
12 & \textbf{0} & \textbf{69.9\%} & 176 \\
\bottomrule
\end{tabular}
\end{table}

\subsection{Cross-Code Validation Under Phenomenological Noise}

We tested five BB codes under phenomenological noise (5 rounds, 10,000 shots per point). Four codes have $w = 3$; one has $w = 4$.

\begin{table}[t]
\caption{mod-$w$ prediction, phenomenological noise, $p = 0.001$.}
\begin{tabular}{lccccc}
\toprule
Code & $w$ & mod-$w$=0 & mod-$w$$\neq$0 & FP & AUC \\
\midrule
$[\![72]\!]$  & 3 & 99.8\% & 1.1\%  & 0.15\% & 0.996 \\
$[\![108]\!]$ & 3 & 100\%  & 1.0\%  & 0.00\% & 0.997 \\
$[\![144]\!]$ & 3 & 99.9\% & 1.1\%  & 0.08\% & 0.997 \\
$[\![180]\!]$ & 3 & 100\%  & 2.0\%  & 0.00\% & 0.994 \\
$[\![144]\!]$ & \textbf{4} & 99.9\% & \textbf{47.4\%} & 0.15\% & \textbf{0.762} \\
\bottomrule
\end{tabular}
\end{table}

\begin{table}[t]
\caption{Same codes, $p = 0.01$.}
\begin{tabular}{lccccc}
\toprule
Code & $w$ & mod-$w$=0 & mod-$w$$\neq$0 & FP & AUC \\
\midrule
$[\![72]\!]$  & 3 & 96.8\% & 9.9\%  & 3.2\%  & 0.938 \\
$[\![108]\!]$ & 3 & 92.4\% & 8.9\%  & 7.6\%  & 0.916 \\
$[\![144]\!]$ & 3 & 86.9\% & 8.5\%  & 13.1\% & 0.894 \\
$[\![180]\!]$ & 3 & 80.0\% & 7.8\%  & 20.0\% & 0.872 \\
$[\![144]\!]$ & \textbf{4} & 77.4\% & \textbf{28.4\%} & 22.6\% & \textbf{0.702} \\
\bottomrule
\end{tabular}
\end{table}

For all four $w = 3$ codes, AUC $\geq 0.994$ at $p = 0.001$. The $w = 4$ code is a striking exception: mod-$w \neq 0$ syndromes converge at 47.4\%, and AUC drops to~0.762. At $w = 4$, each error activates 4 stabilizers, giving BP a larger model space that allows it to find approximate solutions even when measurement errors are present. The prediction is strongest for $w = 3$ --- the column weight of all IBM roadmap codes.

\subsection{X and Z Errors Behave Identically}

BB codes possess a structural symmetry: the X and Z parity check matrices are transposes of each other and share the same column weights. The prediction performs identically for both syndrome types (100\% mod-3=0 convergence for both Z-memory and X-memory experiments at $p = 0.001$).

\subsection{Effect of Column Weight}

Under code-capacity noise (no measurement errors), the mod-$w$ prediction achieves 96--100\% convergence for mod-$w = 0$ syndromes across all column weights tested ($w = 2, 3, 4$), since every syndrome lies in the image of $H_z$ and BP always has a valid solution. The prediction is trivially perfect in this regime.

Under phenomenological noise (Tables~5--6), column weight determines the prediction's sharpness. For $w = 3$, AUC $\geq 0.994$ at $p = 0.001$; for $w = 4$, AUC drops to 0.762 because BP can find approximate solutions even on mod-$w \neq 0$ syndromes (47.4\% convergence). At $w = 4$, each error activates 4 stabilizers, giving BP a larger model space. This additional flexibility allows BP to satisfy the syndrome even when measurement errors are present, weakening the divisibility constraint. The prediction is strongest for $w = 3$ --- the column weight of all IBM roadmap codes --- where the constraint partitions syndromes cleanly.

The prediction applies only to non-degenerate codes ($A \neq B$). Degenerate codes contain short cycles that prevent BP convergence regardless of syndrome structure; for these codes, OSD is always required. All IBM roadmap BB codes are non-degenerate.

\subsection{Invariance Under BP Scheduling}

A natural question is whether the BP message-passing schedule affects the convergence prediction. We compared three schedules available in the ldpc library~\cite{roffe2022}: parallel (flooding), serial (sequential), and serial\_relative (serial with scaled messages). Convergence rates are effectively identical ($< 0.3\%$ difference) across all three schedules at every noise level tested (Table~7). At $p = 0.01$, all three schedules yield 86.8\% convergence for mod-3 = 0 and 3.5--3.6\% for mod-3 = 2. The mod-$w$ prediction is invariant under scheduling strategy --- it depends on whether a valid solution exists in BP's model space, not on the order in which messages are updated.

The schedules differ in wall-clock time (Table~8). Parallel scheduling is 1.0--2.9$\times$ faster than serial\_relative and 1.0--1.4$\times$ faster than serial. The advantage is largest at low noise ($p = 0.01$), where BP converges in fewer iterations and the per-iteration cost of parallel updates is better amortized. Parallel scheduling should be preferred on the basis of speed.

\begin{table}[t]
\caption{BP convergence by schedule (phenomenological noise, Gross code).}
\begin{tabular}{cccc}
\toprule
$p$ & Parallel & Serial & Serial\_rel. \\
\midrule
0.01 & 41.8\% & 41.9\% & 41.8\% \\
0.03 & 10.8\% & 11.0\% & 10.7\% \\
0.05 & 0.9\%  & 1.0\%  & 0.8\% \\
\bottomrule
\end{tabular}
\end{table}

\begin{table}[t]
\caption{BP-only decoding time by schedule ($\mu$s/shot, Apple M4 Pro).}
\begin{tabular}{cccc}
\toprule
$p$ & Parallel & Serial & Serial\_rel. \\
\midrule
0.01 & \textbf{140} & 193 & 412 \\
0.03 & \textbf{210} & 283 & 671 \\
0.05 & \textbf{311} & 319 & 608 \\
\bottomrule
\end{tabular}
\end{table}

\subsection{mod-$w$ as an Optimal Syndrome Classifier}

To establish that mod-$w$ is not merely a useful heuristic but the dominant structural feature explaining BP convergence, we compared its predictive power (AUC) against all other available syndrome features. For each nontrivial syndrome, we computed four features: defect count, mod-3 class (binary), maximum connected component size in the detector graph, and variance of defect positions. AUC was computed for each feature as a classifier of BP convergence (Table~9).

\begin{table}[t]
\caption{AUC for BP convergence prediction by syndrome feature (Gross code, phenomenological noise, 50,000 shots at $p = 0.001$; 20,000 at $p = 0.01$).}
\begin{tabular}{lcc}
\toprule
Feature & AUC ($p\!=\!0.001$) & AUC ($p\!=\!0.01$) \\
\midrule
\textbf{mod-3 (binary)} & \textbf{0.9948} & \textbf{0.8936} \\
defect count & 0.1239 & 0.5165 \\
max connected comp. & 0.1156 & 0.4643 \\
defect position var. & 0.1253 & 0.4852 \\
mod-3 + defect count & 0.9910 & 0.8898 \\
\bottomrule
\end{tabular}
\end{table}

At $p = 0.001$, mod-3 achieves AUC = 0.995 --- a near-perfect binary classifier from a single bit of information. All other features have AUC $\approx 0.12$, reflecting an anti-correlation: high defect count correlates with the mod-3 = 0 class (which converges), rather than with convergence directly. Adding defect count to mod-3 does not improve AUC (0.991 vs 0.995), confirming that defect count carries no independent predictive information beyond what mod-3 already captures.

At $p = 0.01$, mod-3 remains the best single feature (AUC = 0.894), while all alternatives hover near 0.5 (uninformative). The gap narrows because mode~(b) failures (weight-2 clusters) are invisible to mod-3 and grow as $O(p^2)$.

\subsection{Invariance Under Decoder Variant}

A natural question is whether the mod-$w$ prediction is specific to standard min-sum BP or extends to enhanced BP variants. We tested Relay-BP --- a recent decoder~\cite{muller2025} that runs multiple BP instances with randomized scaling factors (ms\_scaling $\sim U[0.5, 1.0]$, 10 relays $\times$ 20 iterations) and accepts the first convergent result. Relay-BP represents the strongest known BP enhancement for quantum LDPC codes and achieves state-of-the-art decoding performance without OSD post-processing.

\begin{table*}
\caption{mod-3 prediction: Standard BP vs Relay-BP (Gross code, phenomenological noise, 10,000 shots).}
\begin{tabular}{lcccc}
\toprule
Metric & Std.\ BP ($p\!=\!0.001$) & Relay-BP ($p\!=\!0.001$) & Std.\ BP ($p\!=\!0.01$) & Relay-BP ($p\!=\!0.01$) \\
\midrule
mod-3=0 convergence    & 99.9\% & 99.9\% & 87.5\% & 87.6\% \\
mod-3$\neq$0 convergence & 1.1\%  & 1.1\%  & 9.1\%  & 9.2\% \\
AUC                     & 0.9960 & 0.9960 & 0.8924 & 0.8924 \\
\bottomrule
\end{tabular}
\end{table*}

The results are identical to three decimal places at both noise levels (Table~10). Relay-BP does not recover convergence on any mod-$3 \neq 0$ syndrome that standard BP fails on. This is expected from Proposition~1: when the defect count is not divisible by $w$, no assignment of data errors can produce the observed syndrome. No message-passing variant --- regardless of scheduling, scaling factors, or relay strategy --- can find a solution that does not exist in the model space. The mod-$w$ prediction is therefore a property of the code structure and syndrome, not of the decoder.

\section{The Practical Payoff}

\subsection{Practical Value}

\textbf{Immediate: OSD routing.} Under phenomenological noise at $p = 0.001$, 65\% of nontrivial syndromes have $\mathrm{defect\_count} \bmod 3 = 0$, and 100\% of these converge under BP. OSD can be skipped with certainty for these syndromes.

\textbf{Architectural: pre-routing.} The mod-$w$ test can be computed before BP begins. Syndromes with mod-$w \neq 0$ can be routed to a BP+OSD path, while mod-$w = 0$ syndromes go to a BP-only path. This is relevant for FPGA-based decoders~\cite{muller2025}, where pre-routing avoids OSD resource contention for the 65\% of syndromes that will not need it. A discrete-event simulation at $p = 0.001$ shows that an OSD worker processes only 35\% of nontrivial syndromes, with average queue depth~0.9. We emphasize these are architectural projections, not hardware benchmarks.

\subsection{Latency Across Noise Levels}

\begin{table*}
\caption{BP convergence and mod-3 prediction across noise levels (Gross code, phenomenological noise, 10,000 shots per point).}
\begin{tabular}{ccccc}
\toprule
$p$ & BP conv. & mod-3=0 fraction & mod-3=0 conv. & OSD rate \\
\midrule
0.0001 & 61.4\% & 61.4\% & 100.0\% & 38.6\% \\
0.0005 & 64.9\% & 64.6\% & 100.0\% & 35.1\% \\
0.001  & 65.4\% & 64.6\% & 99.9\%  & 34.6\% \\
0.002  & 61.4\% & 60.7\% & 99.8\%  & 38.6\% \\
0.005  & 54.7\% & 53.0\% & 97.7\%  & 45.3\% \\
0.01   & 41.8\% & 42.5\% & 86.9\%  & 58.2\% \\
\bottomrule
\end{tabular}
\end{table*}

The BP convergence rate and the mod-3 = 0 fraction are nearly identical at every noise level, meaning essentially all BP convergences are explained by the mod-3 = 0 condition.

\section{Discussion}

\subsection{Comparison with Alternative Pre-Filters}

\begin{table}[t]
\caption{Pre-filter comparison at $p = 0.01$ (Gross code, phenomenological noise).}
\begin{tabular}{lcc}
\toprule
Method & FP rate & FN rate \\
\midrule
Threshold $k\!=\!3$  & 42.9\% & 33.8\% \\
Threshold $k\!=\!6$  & 52.6\% & 29.3\% \\
Threshold $k\!=\!9$  & 56.5\% & 26.0\% \\
Threshold $k\!=\!12$ & 57.8\% & 27.6\% \\
\textbf{Mod-3}       & \textbf{13.1\%} & \textbf{8.5\%} \\
\bottomrule
\end{tabular}
\end{table}

Defect-count thresholding lacks a structural basis --- low defect count does not imply the absence of measurement errors. The mod-3 prediction exploits a structural invariant, yielding qualitatively better classification.

\subsection{Why $\sim$96\% and Not Higher}

\begin{table}[t]
\caption{BP failure rate among mod-3 = 0 syndromes by defect count ($p = 0.01$, 50,000 shots).}
\begin{tabular}{cccc}
\toprule
Defects & Converged & Failed & Failure rate \\
\midrule
3  & 8,354 & 539 & 6.1\% \\
6  & 5,365 & 959 & 15.2\% \\
9  & 2,020 & 632 & 23.8\% \\
12 & 505   & 242 & 32.4\% \\
15 & 87    & 65  & 42.8\% \\
18 & 6     & 11  & 64.7\% \\
\bottomrule
\end{tabular}
\end{table}

\textbf{Direct verification of weight-2 clusters.} Among 16,207 mod-3 = 0 failures (direct error injection, $p = 0.01$, 50,000 shots), 82.0\% contain a weight-2 data error cluster. The cluster rate increases with error weight: 72\% at weight~5, rising to 97\% at weight~8.

\textit{Caveat:} this analysis uses separately generated samples with direct error injection. The relative proportion (82\%) characterizes the failure mechanism but absolute convergence rates are not directly comparable to the noise-level sweep results (Table~XII).

\subsection{Open Questions}

Whether analogous structural predictions exist for other qLDPC code families remains open. A second direction concerns augmenting BP with measurement-error awareness~\cite{muller2025}, which could recover convergence on some mod-$w \neq 0$ syndromes.

\section{Conclusion}

We have presented an $O(1)$ method for predicting BP decoder convergence on Bivariate Bicycle codes. The method exploits a structural property: each physical error activates exactly $w$ stabilizers, so syndromes with defect count not divisible by $w$ necessarily involve measurement errors that BP cannot model.

The prediction requires one modulo operation, achieves $\geq 96\%$ accuracy across five BB codes, and achieves 100\% prediction accuracy for mod-$w = 0$ syndromes at $p \leq 0.001$ under phenomenological noise, enabling OSD to be skipped for 65\% of nontrivial syndromes with no change in correctness. The prediction is invariant under BP scheduling strategy and decoder variant (including Relay-BP), and is most effective at low noise rates ($p \ll p_\mathrm{th}$).

These results apply directly to IBM's Gross code and Two-Gross code, targeted for deployment in 2026--2028.

Code and data available upon reasonable request.

\bibliography{references}

@article{bravyi2024,
  author  = {Bravyi, Sergey and Cross, Andrew W. and Gambetta, Jay M. and Maslov, Dmitri and Rall, Patrick and Yoder, Theodore J.},
  title   = {High-threshold and low-overhead fault-tolerant quantum memory},
  journal = {Nature},
  volume  = {627},
  pages   = {778--782},
  year    = {2024}
}

@misc{ibmroadmap,
  author       = {{IBM Quantum}},
  title        = {{IBM} Quantum Development Roadmap},
  year         = {2024},
  howpublished = {\url{https://www.ibm.com/quantum/roadmap}}
}

@article{roffe2020,
  author  = {Roffe, Joschka and White, David R. and Burton, Simon and Campbell, Earl T.},
  title   = {Decoding across the quantum low-density parity-check code landscape},
  journal = {Physical Review Research},
  volume  = {2},
  pages   = {043423},
  year    = {2020}
}

@article{gidney2021,
  author  = {Gidney, Craig},
  title   = {Stim: a fast stabilizer circuit simulator},
  journal = {Quantum},
  volume  = {5},
  pages   = {497},
  year    = {2021}
}

@misc{roffe2022,
  author       = {Roffe, Joschka},
  title        = {{LDPC}: {P}ython tools for low density parity check codes},
  year         = {2022},
  howpublished = {\url{https://pypi.org/project/ldpc/}}
}

@article{muller2025,
  author  = {Muller, Tristan and Alexander, Thomas and Beverland, Michael E. and Buhler, Markus and Johnson, Blake R. and Maurer, Thilo and Vandeth, Drew},
  title   = {Improved belief propagation is sufficient for real-time decoding of quantum memory},
  journal = {arXiv preprint},
  year    = {2025},
  eprint  = {2506.01779},
  archivePrefix = {arXiv}
}

@book{richardson2008,
  author    = {Richardson, Tom and Urbanke, R{\"u}diger},
  title     = {Modern Coding Theory},
  publisher = {Cambridge University Press},
  year      = {2008}
}

\end{document}